\newcommand{\indep}{\perp \!\!\! \perp}
\newtheorem{theorem}{Theorem}
\newtheorem{corollary}{Corollary}[theorem]
\providecommand{\keywords}[1]
{
  \small	
  \textbf{\textit{Keywords---}} #1
}
\begin{document}

\title{Multiple Imputation for Non-Monotone Missing Not at Random Binary Data using the No Self-Censoring Model}

\author[1,2,*]{Boyu Ren}
\author[4,5]{Stuart R. Lipsitz}
\author[3,4]{Roger D. Weiss}
\author[1,2]{Garrett M. Fitzmaurice}

\affil[1]{Laboratory for Psychiatric Biostatistics, McLean Hospital, Belmont, MA, USA}
\affil[2]{Department of Psychiatry, Harvard Medical School, Boston, MA, USA}
\affil[3]{Division of Alcohol and Drug Abuse, McLean Hospital, Belmont, MA, USA}
\affil[4]{Department of Medicine, Harvard Medical School, Boston, MA, USA}
\affil[5]{Department of Medicine, Harvard Medical School, Boston, MA, USA}
\affil[*]{Corresponding author: Boyu Ren, bren@partners.org}
\date{}
\setcounter{Maxaffil}{0}
\renewcommand\Affilfont{\itshape\small}

\maketitle

\begin{abstract}
Although approaches for handling missing data from longitudinal studies are well-developed when the patterns of missingness are monotone, fewer methods are available for non-monotone missingness. Moreover, the conventional missing at random (MAR) assumption---a natural benchmark for monotone missingness---does not model realistic beliefs about non-monotone missingness processes (Robins and Gill, 1997). This has provided the impetus for alternative non-monotone missing not at random (MNAR) mechanisms. The “no self-censoring” (NSC) model is such a mechanism and assumes the probability an outcome variable is missing is independent of its value when conditioning on all other possibly missing outcome variables and their missingness indicators. As an alternative to “weighting” methods that become computationally demanding with increasing number of outcome variables, we propose a multiple imputation approach under NSC.  We focus on the case of binary outcomes and present results of simulation and asymptotic studies to investigate the performance of the proposed imputation approach. We describe a related approach to sensitivity analysis to departure from NSC. Finally, we discuss the relationship between MAR and NSC and prove that one is not a special case of the other. The proposed methods are illustrated with application to a substance use disorder clinical trial. 
\end{abstract}

\keywords{Missing at random, missing data, sensitivity analysis, fully conditional specification}

\section{Introduction}
Missing data are a common and challenging problem that complicates the analysis of data from research studies across a wide spectrum of fields. In longitudinal study designs the problem of missing data is far more acute than in cross-sectional study designs because missing data can arise at any of the follow-up occasions. It is crucial to properly account for missing data in the analysis stage to obtain unbiased estimates of the parameters of primary interest. Statistical approaches for handling missing data have been well developed when the patterns of missingness are monotone, as might arise in a longitudinal study with attrition or dropout, and when the missingness depends only on the observed data (i.e., when data are missing at random or MAR) \citep{robins1994estimation,tsiatis2006semiparametric,molenberghs2014handbook,little2019statistical}. For example, two widely used approaches for handling missingness are multiple imputation (MI) \citep{rubin1996multiple,little2002bayes} and inverse-probability weighting (IPW) \citep{robins1994estimation}. These two distinct approaches differ in the following way. MI specifies a model for the joint distribution of the missing data given the observed data; in contrast, IPW specifies a model for the probability that an individual has complete data or a specific pattern of observed data. Both approaches are relatively straightforward to implement when the missing data patterns are monotone and the missingness is assumed to be MAR. 

However, there are many longitudinal studies where the restriction to monotone missing data patterns does not hold and the MAR assumption is thought to be questionable. For example, in longitudinal studies of substance use disorders, missing data often occur intermittently, with data on an individual missing at one follow-up occasion and then measured at a later occasion. This leads to a series of complex non-monotone patterns of missingness in the data. Additionally, when the measurements at each study visit collect sensitive information, such as on drug use, this can strongly influence a patients’ decision about whether to appear or skip a study visit. As argued in \cite{robins1997non} and \cite{vansteelandt2007estimation}, this scenario often results in data that are missing not at random (MNAR), i.e., the probability of missingness may also depend on part of the unobserved data. Thus, 
although the MAR assumption is appealing and often considered a benchmark assumption for monotone missing data, it is very difficult to conceive of models for non-monotone missingness that do not impose more restrictive assumptions than what MAR strictly entails. That is, it is difficult to consider non-monotone MAR mechanisms that are not also missing completely at random (MCAR) \citep{robins1997non}. For the case of non-monotone missingness, \cite{linero2018bayesian} provide a very comprehensive review of many alternative identifying restrictions to MAR.

Recent developments for inference on non-monotone and/or MNAR data have mostly focused on applying inverse probability weighting (IPW) with certain identifying assumptions on the missingness mechanism \citep{rotnitzky1998semiparametric,robins2000sensitivity,vansteelandt2007estimation,zhou2010block,li2013weighting,shpitser2016consistent,sadinle2017itemwise,sun2018inverse,tchetgen2018discrete,malinsky2022semiparametric}. On the other hand, imputation approaches for non-monotone MNAR data are relatively scarce and often the theoretical underpinnings are somewhat unclear 
\citep{tompsett2018use,%,hsu2020multiple,hammon2020multiple,sportisse2020imputation,
fletcher2020missing}.
%,ma2021identifiable}.
Recently, a model-based imputation approach for MNAR data with pre-specified assumptions on the missing data mechanism was proposed by \cite{beesley2021multiple}; however, 
their method relies on a strong assumption that 
the missingness indicators (binary variables that equal 0 if an outcome variable is observed and 1 if missing) are conditionally independent given the outcome variables.  
In this paper, we adopt the identifying assumption introduced in \cite{malinsky2022semiparametric} for IPW, named “no self-censoring” (NSC), but instead of using IPW, we construct a multiple imputation approach based on this assumption. Specifically, we assume that the probability an outcome variable is missing is independent of its value when conditioning on all other possibly missing outcome variables and their missingness indicators. Unlike \cite{beesley2021multiple}, the NSC assumption allows for very general dependence among the missingness indicators (conditional on the outcome variables). Moreover, although missingness can depend on unobserved values, and hence is MNAR,  \cite{sadinle2017itemwise} and \cite{malinsky2022semiparametric} have shown that any measurable function of the full data distribution can be identified based on the observed data under the NSC assumption.

\cite{malinsky2022semiparametric} proposed an augmented IPW (AIPW) estimator for the NSC model. This estimator attains a higher level of efficiency compared to previous estimators proposed for the same model. In addition, the estimator is an appealing alternative to conventional MI when the data are MNAR according to the NSC assumption. However,  
the approach poses computational challenges when the number of outcome variables is relatively high. Specifically, the IPW weights associated with different missingness patterns are estimated through a set of estimating equations whose dimension is $O(3^K)$, where $K$ is the number of outcome variables. Solving these estimating equations simultaneously becomes computationally challenging when there are more than five or six outcome variables. To address this limitation, we propose to use multiple imputations, under NSC, based on fully conditional specification (FCS) to fill in the missing values directly. FCS models target at the fully conditional distribution of an outcome variable; to incorporate the NSC assumption into the proposed imputation approach, the set of variables conditioned on in the FCS models includes all other outcome variables as well as their missingness indicators. As a result, for $K$ outcome variables, we need to specify only $K$ regression models. The exact form of the regression models depends on the type of outcome variables and, in principle, a suite of flexible machine learning algorithms can be used to potentially mitigate any misspecification in the fully conditional distributions (e.g., capturing complex interactions between the covariates).
In this paper, we focus of the case where the outcome variables are binary. Of note, \cite{malinsky2022semiparametric} use simulations to compare the performance of their AIPW estimator to conventional MAR multiple imputation by FCS, demonstrating that the latter performs poorly 
when data are generated under a NSC model; this result is as expected. However, in the same simulation settings used by \cite{malinsky2022semiparametric}, we demonstrate
that our proposed NSC approach to multiple imputation by FCS is unbiased (see Figure S1 in the Supplementary Materials). 

We also introduce a straightforward approach for sensitivity analysis to departures from the NSC assumption that can be viewed as an alternative to the Bayesian approach in \cite{sadinle2017itemwise}. Sensitivity analysis is an integral analytic step when missing data are considered to be MNAR. It quantifies the robustness of inferences about parameters of interest to departures from the assumed missing data mechanism. Specifically, we show that there is a natural set of sensitivity parameters that can be introduced into NSC which describe the associations between the missingness indicators and their corresponding outcome variables. A suitable range for these sensitivity parameters can usually be elicited from investigators with subject-matter expertise. In the Methods section, we show that the proposed sensitivity analysis is also computationally efficient. Global sensitivity analysis anchored at nonparametric identified models for MNAR and non-monotone data has also been discussed in \cite{vansteelandt2007estimation,scharfstein2022}.

In Section 2, we review the NSC assumption and the implication of this assumption on the fully conditional distribution of each outcome variable given the other outcome variables and their missingness indicators. We then describe our proposed algorithm for multiple imputation under the NSC assumption and the related approach to sensitivity analysis. We also discuss the relationship between MAR and NSC and prove that one is not a special case of the other. In Section 3, we present results of simulation and asymptotic studies to investigate the performance of the proposed imputation approach in various settings. In Section 4, we illustrate the proposed method with an application to data from a longitudinal clinical trial of patients with substance use disorder; in this clinical trial, missingness was non-monotone and  42\% of study participants were missing at least one of the six planned monthly drug use assessments.

\section{Methods}
We assume the data are i.i.d.\ samples of $(M, Y)$ from an underlying data generating mechanism, where $Y = (Y_1,\ldots, Y_K)$ is the full data vector for the $K$-dimensional multivariate outcome and $M = (M_1,\ldots,M_K)$ is the vector of missingness indicators, with $M_k = 1$ when $Y_k$ is missing and $M_k = 0$ when $Y_k$ is observed. This notation for the missingness indicators is natural and transparent; it also follows the notation used in \cite{little2019statistical}, \cite{sadinle2017itemwise} and in the landmark report by the National Research Council (NRC) \citep{national2010prevention}. We note that it is reversed from another common notation in the literature where indicators for observed and missing variables are assigned the values 1 and 0 respectively.
In this paper, we focus on the case where all $Y_k$ are binary. Later, in the Discussion section, we also consider the extension of our approach to the case where $Y_k$ is continuous, ordinal or a count. We use $M_{-k}$ and $Y_{-k}$ to denote the vector of $M$ and $Y$ without their $k$-th components respectively. The observed sub-vector of $Y$, as indicated by $M$, is denoted by $Y_{(M)}$ (i.e., the components of Y with $M_{k}=0)$ and the observed data are thus i.i.d.\ realizations of $(M, Y_{(M)})$. We are interested in inference on parameters that describe some features of the joint distribution of $Y$, and possibly also their dependence on always-observed covariates (we will introduce additional notation and algorithms for scenarios with covariates later in this section).

\subsection{No Self-Censoring Model}
The no self-censoring (NSC) model, introduced in  
\cite{shpitser2016consistent}, \cite{sadinle2017itemwise} and \cite{malinsky2022semiparametric}, assumes that the missingness indicator $M_k$ of an outcome variable $Y_k$ is independent of $Y_k$ given all other outcome variables and their missingness indicators:
\begin{equation}
    M_k \indep Y_k~|~M_{-k}, Y_{-k}.
    \label{eq:NSC-assumption}
\end{equation}
Note that although the missingness indicator $M_k$ is assumed to be conditionally independent of the value of $Y_k$, the NSC model does allow for dependence on all other possibly missing outcomes (and their missingness indicators); as a result, NSC is an MNAR mechanism. As described in \cite{sadinle2017itemwise}, when $Y$ contains only binary variables, the NSC model for $(M, Y)$ has a deep connection with conventional loglinear models.

We define notation for the loglinear representation of a generic NSC model. Let $\mathcal P(S)$ denote the power set of a set $S$, $\mathcal P^{+}(S) = \mathcal P(S)\setminus \emptyset$, and $[K]$ the set $\{1,\ldots,K\}$. For an arbitrary set $I\subset[K]$, let $Y_I = \prod_{i\in I} Y_{i}$ and $M_I = \prod_{i\in I} M_i$. The joint distribution of $(M,Y)$ under the NSC assumption for any $K>1$ can be expressed with the following loglinear model
\begin{equation}
p(M,Y;\lambda) \propto \exp\left(\sum_{I\in \mathcal P([K])}\sum_{J\in \mathcal P([K]\setminus I)}\lambda_{M_IY_J}M_IY_J\right).
    \label{eq:NSC-generic}
\end{equation}
Here the $\lambda$'s are the model parameters. Recall that the unrestricted joint distribution of $(M,Y)$ can be fully described by a saturated loglinear model, where terms corresponding to all main effects and interactions up to order $2K$ are present. The NSC assumption is equivalent to omitting all interaction terms from the saturated loglinear model which involve both an outcome variable $Y_k$ and its missingness indicator $M_k$ for at least one $k\in[K]$. To be more concrete, consider the case where $K = 3$. The loglinear model for $(M,Y)$ under NSC is
\begin{equation}
\begin{aligned}
p(M,Y;\lambda) \propto \exp&\left(\sum_k (\lambda_{M_k}M_K + \lambda_{Y_k}Y_k) + \sum_{k\neq l}(\lambda_{M_kM_{l}}M_kM_{l} + \lambda_{Y_kY_{l}}Y_kY_{l})\right.\\
+ &\sum_{k\neq l}\lambda_{M_kY_l}M_kY_l+ \lambda_{M_1M_{2}M_3}M_1M_{2}M_3 + \lambda_{Y_1Y_2Y_3}Y_1Y_2Y_3\\
+ &\left.\sum_{k\neq l\neq r}(\lambda_{M_kM_lY_r}M_kM_lY_r + \lambda_{Y_kY_lM_r}Y_kY_lM_r)\right).
\end{aligned}
\label{eq:loglinear-NSC}
\end{equation}

Based on results from \cite{sadinle2017itemwise}, the set of constraints implied by the NSC assumption guarantees that the joint distribution of $(M,Y)$ is identifiable from the observed data $(M, Y_{(M)})$. Of note, although MAR models are also identifiable based on the observed data, we want to emphasize that MAR is not a special case of NSC. More specifically, any further constraints on the $\lambda$'s in the loglinear representation of NSC models that make them MAR will indeed reduce the models to MCAR; see Section 2.4 for more details.

A critical implication of the loglinear representation of NSC models in (\ref{eq:NSC-generic}) and (\ref{eq:loglinear-NSC}) is that the conditional distribution of $Y_k$ given $Y_{-k}$ and $M$ follows a logistic regression model. Specifically, when $K = 3$, we have for any $k\neq l \neq r$
\begin{equation}
\begin{aligned}
    \text{logit}(p(Y_k=1|Y_{-k}, M)) = &\lambda_{Y_k} + \sum_{l\neq k} (\lambda_{M_lY_k}M_l+\lambda_{Y_lY_k}Y_l) + \sum_{l\neq r\neq k}\lambda_{Y_k Y_lM_r} Y_lM_r\\
    +& \lambda_{Y_kY_lY_r}Y_lY_r + \lambda_{Y_kM_lM_r}M_lM_r.
\end{aligned}
\label{eq:logistic-NSC}
\end{equation}
When $K>3$, similar results can be derived for the fully conditional distributions of $Y_k$ where higher order interactions between $Y$ and $M$ are present. From equation (\ref{eq:logistic-NSC}), it also follows that $p(Y_k|Y_{-k},M) = p(Y_k|Y_{-k}, M_{-k}) = p(Y_k|Y_{-k}, M_{-k}, M_k = 0)$, which suggests that we can recover $p(Y_k|Y_{-k}, M_{-k}, M_k = 1)$ using samples where $Y_k$ is observed, i.e., from $p(Y_k|Y_{-k}, M_{-k}, M_k = 0).$

\subsection{Imputation via FCS under NSC}
The logistic regression model property indicated by (\ref{eq:logistic-NSC}) provides a natural route to multiple imputation under NSC through FCS \citep{van2006fully,van2007multiple}. The FCS approach, also known as chained equations \citep{van2011mice}, imputes multivariate missing data using the estimated fully conditional distributions of each variable iteratively.  Conventional FCS is typically applied under the assumption that the data are MAR (referred to as FCS-MAR hereafter). Under MAR, the imputation of missing $Y_k$ relies on the estimated conditional distribution $p(Y_k|Y_{-k},M_k=0)$ derived from samples with $Y_k$ observed, since MAR implies $p(Y_k|Y_{-k},M_k=0) = p(Y_k|Y_{-k},M_k=1)=p(Y_k|Y_{-k})$. A full sweep of this process over $k=1,\ldots,K$ constitutes an iteration of imputation; usually, multiple iterations are performed before the final imputation is generated. Typically, the imputation is performed $T>1$ times with random starts, and the uncertainty of the estimates of parameters of interest is quantified with Rubin's rules \citep{rubin2004multiple} based on all $T$ imputations. Confidence intervals, as well as hypothesis tests, can be obtained in a similar fashion. Similar to Gibbs sampling, the main idea of FCS is to recover the joint distribution of $Y$ from all conditional distributions $p(Y_k|Y_{-k})$. However since $p(Y_k|Y_{-k})$ are not exact in many instances, FCS usually does not have the same convergence guarantees as Gibbs sampling when the number of iterations goes to infinity \citep{heckerman2000dependency,rubin2003nested}.

We first describe the FCS approach for multiple imputation under NSC (hereafter referred to as FCS-NSC) when no covariates are present. Based on (\ref{eq:logistic-NSC}), the conditional distribution of $Y_k$ depends on $M_{-k}$ in addition to $Y_{-k}$. Therefore, we can fit a logistic regression model with $Y_k$ as the outcome and $(Y_{-k},M_{-k})$ as covariates, on the subset of the data where $Y_k$ is observed. This yields an estimate of the conditional distribution $p(Y_k|M_{-k}, Y_{-k})$, which serves as the central component in an FCS algorithm. In addition, under NSC, the logistic model will always be correctly specified for $p(Y_k|Y_{-k}, M_{-k})$ as suggested by (\ref{eq:logistic-NSC}), provided all relevant main effects and two-way and higher-order interactions are included. This means that with large sample size, the FCS approach can proceed without the need to address potential model misspecification. However, in applications, the logistic regression model in (\ref{eq:logistic-NSC}) is usually not practical, especially when $K$ is large. As a result, we suggest imposing constraints on the higher-order interaction terms by assuming their corresponding regression coefficients are zero. We illustrate the proposed FCS approach in Algorithm \ref{alg:fcs-main}.

\begin{algorithm}[htp]
\caption{Multiple imputation with FCS based on NSC assumption (FCS-NSC)}
\begin{algorithmic}[1]
\State Let $T$ be the number of imputations;
\State Let $R$ be the number of iterations for each imputation;
\State Let $Y^{(t)}$ be the results from the $t$-th imputation;
\For{$t\gets 1$ to $T$} \Comment{Each imputation replicate}
\For{$k\gets 1$ to $K$}
    \For{$i\gets 1$ to $N$}
        \If{$Y_{i,k}=\text{NA}$}
        \State $Y^{(t)}_{i,k}\gets$ sample from the empirical distribution $\hat p(Y_{k}|M_k = 0)$; \Comment{Initial filling}
        \EndIf
    \EndFor
\EndFor

\For{$r\gets 1$ to $R$} \Comment{Iterations within imputation}
    \For{$k\gets 1$ to $K$}
    \State $\hat p_r(Y_k|Y_{-k},M_{-k},M_k=0)\gets$ Logistic regression with augmentation \citep{white2010avoiding} on current $(Y^{(t)}, M)$;
    \State Update the missing values of $Y^{(t)}_k$ by sampling from $\hat p_r(Y_k|Y_{-k}, M_{-k}, M_k = 0)$;
    \EndFor
\EndFor
\EndFor
\State \Return $Y^{(1)}, Y^{(2)},\ldots, Y^{(T)}$;
\end{algorithmic}
\label{alg:fcs-main}
\end{algorithm}

In general, a potential theoretical weakness of the FCS method is that the implicit joint distribution underlying the separate models may not always exist, i.e., the conditional models may be incompatible. However, in our proposed FCS approach we note that the conditional logistic regression models are compatible with, and implied by, the loglinear joint distribution. As a result, imputations created in Algorithm \ref{alg:fcs-main} are draws from the implicit loglinear joint distribution. \cite{liu2014stationary} showed that imputation created by compatible conditional models is asymptotically equivalent to full Bayesian imputation for the assumed joint model. \cite{hughes2014joint} note that, in finite samples, an additional condition is required. They prove that compatible conditional models, together with a ``non-informative margins'' condition, is sufficient for the FCS algorithm to impute missing data from the predictive distribution of the missing data implied by the joint model. When the ``non-informative margins'' condition is not satisfied, order effects (i.e., systematic differences depending upon the ordering of variables in the FCS algorithm) can occur, although results of simulations indicate that 
the average magnitude of the order effects is small and does not induce bias \citep{hughes2014joint}. The efficiency, on the other hand, may suffer a modest reduction \citep{seaman2018relative}.

Finally, we consider the case where there are always-observed covariates of interest $X$. The goal is to relate $Y_1,\ldots, Y_K$ to $X$, and the FCS-NSC algorithm operates under a minor modification of the NSC assumption. Specifically, the conditional independence assumption also involves conditioning on $X$:
$$
Y_k\indep M_k | Y_{-k}, M_{-k}, X.
$$
The loglinear representation in (\ref{eq:NSC-generic}) still holds for the conditional distribution $(Y,M)|X$, where the loglinear model parameters $\lambda$'s are now functions of $X$, denoted by $\lambda(X)$:
$$
p(M,Y|X) \propto \exp\left(\sum_{I\in \mathcal P([K])}\sum_{J\in \mathcal P([K]\setminus I)}\lambda_{M_IY_J}(X)M_IY_J\right).
$$
As a result, the regression parameters in the logistic model (\ref{eq:logistic-NSC}) also depends on $X$. When $X$ is continuous, we propose to incorporate $X$ in Algorithm \ref{alg:fcs-main} by adding its main effect and interactions with $Y_{-k}$ and $M_{-k}$ to the logistic models in line 13 and impute $Y_k$ by sampling from the expanded logistic models in line 14. This approach is equivalent to assuming that $\lambda(X)$ is a linear function of $X$. When $X$ is discrete, we can stratify the dataset with respect to $X$ and impute within each stratum using FCS-NSC.

\subsection{Sensitivity Analysis under NSC}
Next, we describe how to conduct sensitivity analysis anchored at NSC. We argue that NSC is a better anchoring assumption compared to MAR for non-monotone and MNAR data because: 1) it is exceedingly challenging to conceive of models for non-monotone missingness that do not impose more restrictive assumptions than what MAR strictly entails; the few existing MAR models for non-monotone missingness involve complex specifications for the conditional probabilities of different missing patterns \citep{robins1997non,sun2018inverse}; 2) these MAR models typically do not have a set of clinically interpretable parameters through which departures from MAR can be achieved. On the contrary, NSC leads to simple forms of the fully conditional distributions (see for example Eq. \ref{eq:logistic-NSC}) and a natural set of sensitivity parameters that are readily interpretable by study investigators while preserving the computational efficiency of the imputation approach.

Although the NSC assumption allows the probability that a particular variable is missing to depend on the value of other unobserved variables, in many applications one might want to consider MNAR mechanisms where the probability that a particular variable is missing depends on the (possibly unobserved) value of that variable. The latter dependence can be expressed in terms of a set of sensitivity parameters. Recall the connection between NSC and the loglinear model in (\ref{eq:loglinear-NSC}). By adding interaction terms to the loglinear model that are omitted by the NSC assumption, we can produce a model for $(M,Y)$ that deviates arbitrarily from NSC. The number of such interaction terms is large when $K>5$ and we focus on the set of interaction terms of the lowest order: $\lambda_{M_kY_k}M_kY_k$ for $k = 1,\ldots, K$. We note that these parameters for the conditional association between $M_k$ and $Y_k$ are not identifiable based on the observed data and therefore are natural candidates for sensitivity parameters. With these additional terms, equation (\ref{eq:logistic-NSC}) now becomes
$$
\begin{aligned}
    \text{logit}(p(Y_k=1|Y_{-k}, M)) = &\lambda_{Y_k} + \lambda_{M_kY_k}M_k + \sum_{l\neq k} (\lambda_{M_lY_k}M_l+\lambda_{Y_lY_k}Y_l)\\
    +& \sum_{l\neq r\neq k}\lambda_{Y_k Y_lM_r} Y_lM_r + \lambda_{Y_kY_lY_r}Y_lY_r + \lambda_{Y_kM_lM_r}M_lM_r.
\end{aligned}
$$
The sensitivity parameter $\lambda_{M_kY_k}$, assumed to be fixed and known, indicates the conditional log odds ratio of $Y_k$ being missing when $Y_k=1$ versus when $Y_k=0$. For example, if $Y_k$ represents the occurrence of certain clinical events, such as abstinence from a drug at time $k$, this parameter has a clear clinical interpretation -- the difference in log odds of not responding at time $k$ between abstinent and and non-abstinent individuals, and thus its plausible range of values can be inferred from domain knowledge.

For simplicity, we introduce the sensitivity analysis based on Algorithm \ref{alg:fcs-main} by assuming that $\lambda_{M_kY_k} = \lambda_{MY}$ for $k=1,\ldots,K$. Note that since the augmented logistic regression in line 13 of Algorithm \ref{alg:fcs-main} is applied to subjects with $M_k = 0$, a non-zero sensitivity parameter $\lambda_{MY}$ does not change the results for this step. However, when predicting the missing values of $Y_k$, we need to use the sensitivity parameter $\lambda_{MY}$:
$$
\text{logit}(\hat p_\lambda(Y_k|Y_{-k}, M_{-k})) = \text{logit}(\hat p(Y_k|Y_{-k}, M_{-k})) + \lambda_{MY}.
$$

\subsection{Relationship between NSC and MAR}
In this section, we discuss in detail the connection, and the distinction between NSC and MAR assumptions, given that MAR is the most commonly assumed missing data mechanism for conventional FCS. We argue that when the true missing data mechanism is MAR, imputation using the proposed FCS-NSC algorithm (Algorithm \ref{alg:fcs-main}) will introduce bias and {\em vice versa}. Even though it appears that Algorithm \ref{alg:fcs-main} is no more complex than the conventional FCS-MAR algorithm, except for the inclusion of additional missingness indicators $M_{-k}$ in the fully conditional models, we show that these redundant and seemingly harmless terms will lead to incorrect sampling distributions for imputation.

Recall that a necessary condition for the validity of any FCS algorithm is specified through the equalities
\begin{equation}
p(Y_k|\cdot, M_k = 1) = p(Y_k|\cdot, M_k = 0), \forall k = 1,\ldots,K,
\label{eq:FCS-validity}
\end{equation}
where $\cdot$ indicates the set of covariates in the FCS model. In FCS-MAR, $Y_{-k}$ are the corresponding covariates; whereas in FCS-NSC, $Y_{-k}$ and $M_{-k}$ are the corresponding covariates. Equalities (\ref{eq:FCS-validity}) ensure that the fully conditional models trained with samples where $Y_k$ is observed approximate well the fully conditional distributions of $Y_k$ in samples where $Y_k$ is missing. As a result, the imputation step of an FCS algorithm samples from the correct conditional distribution. To show the validity of the FCS-MAR algorithm under MAR, note that the definition of MAR implies that for any $k\in[K]$, 
$
p(M_1,\ldots,M_k = 1,M_{k+1},\ldots,M_K|Y_1,\ldots, Y_K)
$ does not depend on $Y_k$ for any values of $M_{-k}$. Hence  $\sum_{M_{-k}}p(M_{-k},M_k = 1|Y_1,\ldots, Y_K)$ also does not depend on $Y_k$. In other words, $p(M_k = 1|Y_1,\ldots,Y_K) = p(M_k=1|Y_{-k})$. This means that $M_k$ and $Y_k$ are conditionally independent given $Y_{-k}$ for $k=1,\ldots,K$ and thus (\ref{eq:FCS-validity}) holds. This set of equalities for FCS-NSC under NSC can be easily verified using the assumption in (\ref{eq:NSC-assumption}).

If we perform imputation with FCS-MAR when the true mechanism is NSC, these equalities generally do not hold and the imputation results are biased. However, with carefully chosen values of the $\lambda$'s, an NSC model can satisfy them without reducing to MCAR. We prove in Section S3 of the Supplementary Materials that even under these special cases, imputing data with FCS-MAR when the true 
mechanism is NSC still leads to biased results. Next, we focus on showing that under MAR but not MCAR, equality (\ref{eq:FCS-validity}) with $Y_{-k}$ and $M_{-k}$ as covariates does not hold, which means using FCS-NSC for MAR data will generate biased results. This conclusion is a direct consequence of the following theorem.% and its corollary.

\begin{theorem}
All NSC models for binary outcomes $(Y_1,\ldots,Y_K)$ that also satisfy the MAR assumption are MCAR.
\label{thm:NSC-MAR}
\end{theorem}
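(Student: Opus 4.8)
The plan is to work in the loglinear parametrization~(\ref{eq:NSC-generic}) and show that imposing MAR on an NSC model forces every ``cross'' parameter $\lambda_{M_IY_J}$ with $I$ and $J$ both nonempty to vanish; once that is done $p(M,Y)$ factorizes into a function of $M$ times a function of $Y$, which is exactly MCAR. Write $R(m)=\{k:m_k=1\}$ for the missing set of a pattern $m$. I will use the reformulation of MAR from the discussion preceding the theorem: for every pattern $m$ and every $k\in R(m)$, $p(M=m\mid Y=y)$ does not depend on $y_k$. Since an NSC loglinear model is strictly positive, every pattern has positive probability and every such conditional is well defined, so all of these constraints are genuinely in force.

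First I would record the form of $p(M\mid Y)$. Cancelling the $Y$-only loglinear terms, which are common to every value of $m$, gives $p(M=m\mid Y=y)=\exp(g_m(y))/Z(y)$ with $Z(y)=\sum_{m'}\exp(g_{m'}(y))$ and
\begin{equation*}
g_m(y)=\sum_{\emptyset\neq I\subseteq R(m)}\lambda_{M_I}+\sum_{\emptyset\neq I\subseteq R(m)}\ \sum_{\emptyset\neq J\subseteq[K]\setminus I}\lambda_{M_IY_J}\,y_J ,
\end{equation*}
where $J\subseteq[K]\setminus I$ is precisely the NSC restriction. Grouping monomials, $g_m(y)$ is a constant plus $\sum_{\emptyset\neq J}c_m(J)\,y_J$ with $c_m(J)=\sum_{\emptyset\neq I\subseteq R(m),\,I\cap J=\emptyset}\lambda_{M_IY_J}$, and since the functions $\{y_J:J\subseteq[K]\}$ are linearly independent on $\{0,1\}^K$, $g_m$ is free of $y_k$ if and only if $c_m(J)=0$ for every $J\ni k$.

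The crucial preliminary step is to eliminate the normalizer. (For $K=1$ the theorem is trivial, since NSC already reads $M_1\indep Y_1$, so assume $K\ge2$.) Apply MAR to the singleton pattern $m=e_k$, in which only $Y_k$ is missing: here $R(m)=\{k\}$, only $I=\{k\}$ survives, and $g_{e_k}(y)=\lambda_{M_k}+\sum_{\emptyset\neq J\subseteq[K]\setminus\{k\}}\lambda_{M_kY_J}\,y_J$ contains no $y_k$ --- this is the step where NSC is used. The numerator $\exp(g_{e_k}(y))$ is therefore positive and free of $y_k$, so MAR, which makes $p(M=e_k\mid Y=y)$ free of $y_k$, forces $Z(y)$ to be free of $y_k$ as well; ranging over $k$ shows $Z$ is constant. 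With $Z$ constant we have $p(M=m\mid Y=y)\propto\exp(g_m(y))$, so MAR now says exactly: for every $R$, every $k\in R$, and every $J\ni k$, $\sum_{\emptyset\neq I\subseteq R,\,I\cap J=\emptyset}\lambda_{M_IY_J}=0$.

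To finish, fix a nonempty set $L$, pick $k\in L$, and for an arbitrary $S\subseteq[K]\setminus L$ take the pattern with missing set $R=S\cup\{k\}$; applying the displayed constraint with $J=L$ and noting that $I\subseteq R$ together with $I\cap L=\emptyset$ forces $I\subseteq S$, we get $\sum_{\emptyset\neq I\subseteq S}\lambda_{M_IY_L}=0$ for every $S\subseteq[K]\setminus L$. A short induction on $|S|$, peeling off the top term $\lambda_{M_SY_L}$, then gives $\lambda_{M_IY_L}=0$ for every nonempty $I$ disjoint from $L$; since $L$ was an arbitrary nonempty set, every cross parameter vanishes and the model is MCAR. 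The one genuinely delicate point --- and the reason the result is not simply read off the loglinear form --- is the normalizer $Z(y)$: a priori its dependence on $y$ could mask the vanishing of the cross parameters, and the singleton-pattern argument is exactly what is needed to show $Z$ is constant before any parameter constraints can be extracted.
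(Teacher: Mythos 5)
Your proof is correct and follows essentially the same route as the paper's: cancel the $Y$-only terms, use the singleton patterns $m=e_k$ together with NSC to show the normalizer $Z(y)$ is constant, and then read off the vanishing of every cross parameter $\lambda_{M_IY_J}$ from the MAR constraints on patterns with larger missing sets. Your packaging of the final step via linear independence of the monomials $y_J$ on $\{0,1\}^K$ plus induction on $|S|$ is a somewhat tidier rendering of the paper's ``sequentially flip one $Y_l$ at a time'' argument, but it is the same underlying idea.
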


\begin{proof}
We start by writing out the conditional distribution $p(M|Y)$ under the NSC assumption with the loglinear representation:
\begin{equation}
\begin{aligned}
p(M|Y) &= \frac{p(M, Y)}{p(Y)} = \frac{p(M, Y)}{\sum_{m \in \mathcal M_K} p(M = m, Y)}\\
& = \frac{\exp\left(\sum_{I\in \mathcal P([K])\setminus[K] }\sum_{J\in \mathcal P^+([K]\setminus I)}\lambda_{Y_IM_J}Y_IM_J\right)}{\sum_{m\in \mathcal M_K}
\exp\left(\sum_{I\in \mathcal P([K])\setminus[K] }\sum_{J\in \mathcal P^+([K]\setminus I)}\lambda_{Y_IM_J}Y_Im_J\right)},
\end{aligned}
\label{eq:NSC-MAR-proof}
\end{equation}
where $\mathcal M_K$ denotes the set of all possible values of a $K-$dimensional binary vector. Since $Y$ is conditioned upon, all terms involving only $Y$'s are cancelled out. The remaining terms are either $M$-$Y$ interactions or $M$-$M$ interactions (including main effects of $M$).

Denote $e_k$ as the $K$-dimensional vector with the $k$-th element equal to one and all other elements equal to zero. If the model is also MAR, $P(M=m|Y)$ only depends on $Y_{(m)}$, the observed sub-vector of $Y$. Hence $P(M = e_k|Y)$ does not depend on $Y_k$. In this case, the numerator of (\ref{eq:NSC-MAR-proof}) is free of $Y_k$, since by the NSC assumption, the only $M$-$Y$ interaction terms with $Y_k$ involved are those including exclusively $M_{k'}, k\neq k'$. But $M_{k'} = 0$ for all $k'\neq k$. Therefore, if an NSC model is also MAR, the denominator of (\ref{eq:NSC-MAR-proof}) does not depend on $Y_k$. This holds for all $k = 1,\ldots, K$ and thus the denominator of (\ref{eq:NSC-MAR-proof}) is a constant. Note that the denominator being a constant does not necessarily mean that all $\lambda$'s for $M$-$Y$ interactions are zero. Indeed, the number of $\lambda$'s corresponding to $M$-$Y$ interactions is $3^K - 2^{K+1} + 1$. It is larger than the number of equations implied by the condition of constant denominator, which is $2^K - 1$, when $K>2$. Hence even though $\lambda_{Y_IM_J}=0$ for all $M-Y$ interactions is a solution to the equations, it is not unique.

Since the denominator is a constant, when $m = e_k + e_{k'}$, $k\neq k'$, the MAR assumption holds if and only if the numerator of (\ref{eq:NSC-MAR-proof}) does not depend on either $Y_k$ or $Y_{k'}$. We can write out explicitly the numerator:
\begin{align*}
f(Y;\lambda) = &\exp\left(\sum_{I\in \mathcal P([K])\setminus[K] }\sum_{J\in \mathcal P^+([K]\setminus I)}\lambda_{Y_IM_J}Y_Im_J\right)\\
=& \exp\left(\sum_{I\in \mathcal P([K]\setminus \{k\})}\lambda_{Y_IM_k}Y_I + \sum_{J\in \mathcal P([K]\setminus \{k'\})}\lambda_{Y_JM_{k'}}Y_J + \sum_{L\in \mathcal P([K]\setminus \{k,k'\})} \lambda_{Y_L M_k M_{k'}}Y_L \right).
\end{align*}
The MAR assumption implies that
\begin{equation}
f(Y_k = 0, Y_{k'}=0, Y_{-\{k,k'\}}=0;\lambda) = f(Y_k = 1, Y_{k'}=0, Y_{-\{k,k'\}}=0;\lambda).
\label{eq:NSC-MAR-proof-pair}
\end{equation}
It follows that
$
\lambda_{Y_{k'}M_k} = 0.
$
Similarly, we have $\lambda_{Y_kM_{k'}}=0$. Repeatedly applying this approach, we can prove that $\lambda_{Y_kM_{k'}}=0$ for all $(k,k')$ pairs with $k\neq k'$. Now if we set $Y_{k''}=1$ for an arbitrary $k''\neq k$ and $k''\neq k'$, we can show that $\lambda_{Y_{k'}Y_{k''}M_k}=0$ for all triplet $(k, k', k'')$ with $k\neq k'\neq k''$. Sequentially flipping one $Y_l$ from zero to one in (\ref{eq:NSC-MAR-proof-pair}), we can prove that $\lambda_{Y_IM_k}=0$ for all $I\in \mathcal P^+([K]\setminus k)$ and $k = 1,\ldots,K$.

When applying the above process to an $m$ with $K_0\geq 3$ elements equal to 1, we can follow the same procedure above and show that the $\lambda$'s with $K_0$ $M$'s in the subscripts are all zero. In other words, $\lambda_{Y_IM_J}= 0$ where $J\subset[K]$ with cardinality $K_0$ and $I\in \mathcal P^+([K]\setminus J)$. Thus, $\lambda_{Y_IM_J} = 0$, for all $I\in\mathcal P^+([K])\setminus[K]$ and $J\in \mathcal P^+([K]\setminus I)$. This means that if an NSC model is also MAR, then its loglinear representation only includes main effects of $M$ and $Y$ as well as $M$-$M$ and $Y$-$Y$ interactions. Such a model is indeed MCAR as $p(M|Y)$ does not depend on $Y$.
\end{proof}

\begin{corollary}
For any MAR model for binary outcomes $(Y_1,...,Y_K)$ that is not MCAR, there exists at least one $\lambda_{Y_IM_J}\neq 0$, where $I, J\subset[K]\setminus\emptyset$ and $I\cap J\neq \emptyset$.
\label{cor:MAR-NSC}
\end{corollary}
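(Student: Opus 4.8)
The plan is to derive Corollary~\ref{cor:MAR-NSC} as the contrapositive of Theorem~\ref{thm:NSC-MAR}, combined with the loglinear characterization of the NSC constraint recalled just after~(\ref{eq:NSC-generic})--(\ref{eq:loglinear-NSC}). First I would fix an arbitrary MAR model for the binary vector $(Y_1,\ldots,Y_K)$ and write its joint law $p(M,Y)$ in the fully saturated loglinear form, in which every main effect and interaction among $Y_1,\ldots,Y_K,M_1,\ldots,M_K$ up to order $2K$ carries a coefficient $\lambda_{Y_IM_J}$ indexed by a pair $(I,J)$ with $I,J\subseteq[K]$ (not required to be disjoint); as throughout the paper, this uses positivity of the joint distribution. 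The only structural input I need is the stated equivalence: $p(M,Y)$ satisfies the NSC assumption~(\ref{eq:NSC-assumption}) if and only if $\lambda_{Y_IM_J}=0$ for every pair $(I,J)$ with $I\cap J\neq\emptyset$, i.e.\ for every interaction term that involves some $Y_k$ together with its own indicator $M_k$.

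The argument then proceeds by contradiction. Suppose the given MAR model is not MCAR, yet $\lambda_{Y_IM_J}=0$ for every $(I,J)$ with $I\cap J\neq\emptyset$. By the equivalence above, this model satisfies the NSC assumption, so it is simultaneously NSC and MAR. Theorem~\ref{thm:NSC-MAR} then forces it to be MCAR, contradicting the hypothesis. Hence at least one coefficient $\lambda_{Y_IM_J}$ with $I\cap J\neq\emptyset$ is nonzero; since $I\cap J\neq\emptyset$ already makes both $I$ and $J$ nonempty, this is exactly the form claimed in the statement.

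Because Theorem~\ref{thm:NSC-MAR} does all of the substantive work, the proof is short, and the only point requiring care is bookkeeping: confirming that ``MAR but not MCAR'' is a property of the distribution rather than of a chosen coordinatization, and that the vanishing of precisely the $I\cap J\neq\emptyset$ coefficients is equivalent to (not merely implied by) NSC, so that the contradiction is legitimate. I expect this to be the main, and fairly mild, obstacle; once it is in place the contradiction is immediate. It would also be natural to close with the interpretive remark that such a nonzero $\lambda_{Y_IM_J}$ is exactly a term the NSC loglinear model~(\ref{eq:loglinear-NSC}) sets to zero, which explains why including $M_{-k}$ in the conditioning set of the FCS-NSC models makes them draw from the wrong predictive distribution when the data are MAR but not MCAR.
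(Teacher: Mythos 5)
Your argument is correct and is essentially the paper's own proof: both rest on writing the MAR model as a saturated loglinear model, invoking the equivalence of NSC with the vanishing of all $\lambda_{Y_IM_J}$ having $I\cap J\neq\emptyset$, and applying Theorem~\ref{thm:NSC-MAR}; you merely phrase it as a contradiction where the paper states the contrapositive directly. No gaps.
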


\begin{proof}
We first note that any joint distribution of $K$ binary random variables and their missingness indicators under a MAR assumption can be represented by a saturated loglinear model. If an MAR model is not MCAR, then it cannot satisfy the NSC assumption by Theorem \ref{thm:NSC-MAR}. The proof is completed by the fact that the NSC assumption is equivalent to the restrictions on a loglinear model that $\lambda_{Y_IM_J}=0$ for any $I\cap J\neq \emptyset$.
\end{proof}

Without loss of generality, assuming for a MAR model (that is not also MCAR) the non-zero $\lambda_{Y_IM_J}$ with $I\cap J\neq \emptyset$ also satisfies $k_0\in I\cap J$, it follows from (\ref{eq:NSC-generic}) that $\text{logit}(p(Y_{k_0}|Y_{-k_0},M))$ is also a function of $M_{k_0}$ and thus
$$
p(Y_{k_0}|Y_{-k_0}, M_{-k_0}, M_{k_0}=1) \neq p(Y_{k_0}|Y_{-k_0}, M_{-k_0}, M_{k_0}=0).
$$
The FCS algorithm under NSC applies an incorrect sampling distribution for $Y_{k_0}$ if the missing mechanism is MAR (but not also MCAR) and will generate biased imputation results. From the perspective of applications of FCS models, this result suggests that including missingness indicators in FCS models should not be considered as a default or an automatic strategy for imputation of missing data in every setting.

Finally, in proposing NSC as an alternative to MAR for the analysis of data that are missing non-monotonically, it is worth emphasizing some important distinctions between NSC and MAR.
Recall that MAR is fundamentally defined in terms of an assumption about the probabilities of particular missingness patterns, $p(M | Y)$ but not 
 the probabilities that an individual variable is missing, $p(M_k |Y)$. Specifically, MAR requires that the probability of the {\em observed} missingness pattern does not depend on any unobserved values of $Y$, i.e., $p(M | Y)=p(M | Y_{(M)})$. As a result, one can estimate any parameters related to the joint distribution of $Y$ based solely on $p(Y_{M})$ \citep{little2019statistical}, which can be done using the conventional FCS-MAR. We note that the MAR model for $p(M | Y_{(M)})$ given in \cite{lin2018exact} (and used in our simulations below) are contrived linear regression models with unusual constraints to ensure these joint conditional MAR probabilities for $p(M | Y_{(M)})$ are between 0 and 1.  Note, the joint model for $(M, Y)$ under MAR (but not MCAR) cannot be a non-saturated loglinear model \citep{kimandkim2017}. If one tries to coerce or force the joint model for $(M, Y)$ under MAR to be a loglinear model under NSC, then we actually force some of the loglinear parameters corresponding to interaction(s) between $Y$ and $M$ to
be non-zero, i.e., force the MAR model (which cannot be a non-saturated loglinear model) to be an MNAR loglinear model.

In contrast to MAR, NSC can be defined in terms of an assumption about 
$p(M_{k} | M_{-k}, Y)$; specifically it requires the conditional independence of $M_k$ and $Y_k$, but allows dependence on unobserved values of all other outcomes (and all other missingness indicators) and thus is a MNAR mechanism.  As was highlighted earlier,  MAR is emphatically not a special case of NSC; put another way, for a NSC model to be MAR, it must also be MCAR (with $p(M | Y)$ not depending on $Y$). Fundamentally, NSC and MAR are distinct assumptions about the missing data mechanism; consequently, and as will be illustrated in finite sample and asymptotic studies in Section 3, a misspecified FCS algorithm based on assuming MAR will yield biased estimates when the true model is NSC and {\em vice versa}.

\section{Simulation and Asymptotic Studies}
In this section, we illustrate the performance and properties of the FCS-NSC algorithm in three distinct settings. For the first two settings, we use loglinear models to generate $(M,Y)$ under a NSC missing mechanism, while in the third, we simulate missing data under a MAR assumption. We set $K=6$ in all three scenarios to reflect the dimension of the outcome variables in our data application in Section 4; we describe the details of the three simulation scenarios below. In addition, we supplement the simulations with studies of asymptotic bias in the same scenarios. The R code to replicate all analyses is available in the Supplementary Materials.

\begin{enumerate}
    \item \textbf{Main effect NSC}. We specify the joint distribution of $(M,Y)$ based on the generic loglinear representation in (\ref{eq:NSC-generic}).
    We place the constraints that $\lambda_{Y_IM_J} = 0$ when $|I|>1$ and $|J|>1$ or $|I| + |J|>2$. This implies that the logistic regression model of $Y_k|Y_{-k},M_{-k}$ only contains main effects of $Y_{-k}$ and $M_{-k}$ (see Eq. \ref{eq:logistic-NSC}). We fix $\lambda_{Y_kY_{k'}}=0.5$ for $k\neq k' = 1,\ldots,K$ and let $\lambda_{Y_kM_{k'}}=2$ for $k' \in \{1, 2, 3\}$ and $\lambda_{Y_kM_{k'}}=-2$ for $k'\in \{4, 5, 6\}$. We specify the values of the other non-zero $\lambda$ parameters by solving a system of equations such that $P(Y_k = 1) = 0.4$ when $k\leq 3$ and $P(Y_k = 1) = 0.6$ when $k>3$. We set the marginal missingness rate, $P(M_k)$, to be a fixed values across all $k$ and three different values $0.2, 0.3, 0.4$ are considered.

    \item \textbf{$Y$-$M$ interaction NSC}. In this scenario, we 
    expand the previous scenario to allow for additional non-zero $\lambda$ parameters in (\ref{eq:NSC-generic}). Specifically, we fix all $\lambda_{Y_kY_{k'}M_{k''}}$ to be a non-zero constant. This indicates that the logistic model for $Y_k|Y_{-k},M_{-k}$ now involves not only the main effects of $Y_{-k}$ and $M_{-k}$ but also interaction terms $Y_kM_{k'}$. We consider three possible values ($-0.5,-1,-2$) for these order-3 parameters. The $\lambda_{Y_kY_{k'}}$ parameters are still set to be 0.5 for all $k\neq k'=1,\ldots,K$ and we choose the other non-zero $\lambda$ parameters such that the marginal probabilities $P(Y_k = 1)$ are 0.4 when $k \leq 3$ and $P(Y_k = 1) = 0.6$ when $k > 3$. We fix the marginal missingness rate at 0.3 for all outcome variables.

    \item \textbf{MAR}. As we have mentioned in Section 2, it is not straightforward to specify a non-monotone MAR model with $K = 6$. Therefore, we construct a non-monotone MAR model by assuming that $(Y_k,Y_{k+3},M_k,M_{k+3})$, $k = 1,2,3$ are independent of each other and each group follows the MAR model in \cite{lin2018exact}. In particular, we assume that the conditional distribution of $M_{k}, M_{k+3}$ given $Y_k,Y_{k+3}$ is
    \begin{equation}
        \begin{aligned}
        Pr(M_k=1,M_{k+3}=0|Y_k,Y_{k+3}) &= w_1Y_{k+3} + w_2(1-Y_{k+3}),\\
        Pr(M_k=0,M_{k+3}=1|Y_k,Y_{k+3}) &= v_1Y_k + v_2(1-Y_k),\\
        Pr(M_k=0,M_{k+3}=0|Y_k,Y_{k+3}) &= \pi_0.
        \end{aligned}
        \label{eq:MAR-bivar}
    \end{equation}
    For the joint distribution of $Y_1,\ldots,Y_K$,  we use a loglinear model with second-order interactions among the $Y_k$'s and set $\lambda_{Y_kY_{k'}} = 0.5$ or $2$.  Thus, the joint distribution of $M$ and $Y$ under MAR is specified by the conditional distribution of $M$ given $Y$ and the marginal distribution of $Y.$ We select the values of all other parameters such that the same marginal probabilities of $Y_k$, as well the marginal missingness rates, as in the first simulation scenario are obtained. We note that we used a set of three bivariate MAR mechanisms because more general non-monotone MAR mechanisms require many constraints on $f(M|Y)$ and, in general, do not model realistic beliefs about the missing data process in most substantive applications \citep{robins1997non}. Although this MAR mechanism may be somewhat contrived and unrealistic, it nonetheless allows us to examine the robustness of FCS-NSC when the true missingness mechanism is MAR.  
    
\end{enumerate}

In the simulation and asymptotic studies, we consider three approaches for handling the missing data: 1) FCS-NSC algorithm, 2) FCS-MAR algorithm and 3) available-case analysis (using all observed data at a given time point for estimation). We estimate the fully conditional distribution of $Y_k$ in the FCS-NSC algorithm via logistic regression and include only main effects of $Y_{-k}$ and $M_{-k}$ in the model. The FCS-MAR algorithm operates similarly, except we use only main effects of $Y_{-k}$ in the logistic model of $Y_k|Y_{-k}$. We benchmark the three approaches based on their accuracy in estimating the marginal probabilities of $Y_k$, $k = 1,\ldots,6$. Specifically, we examine the performance under finite sample size ($N = 200$) that is consistent with the sample size in our data application in Section 4, and also asymptotically, where $N\to\infty$ and imputation is directly applied to the underlying distribution of $(M,Y_{(M)})$. The details of the algorithm used to derive the asymptotic results are deferred to Section S2 in the Supplementary Materials. We also have explored nonparametric models such as random forest and support vector machine as alternatives for logistic regression for the two FCS algorithms, with the hope that they would better address the issue of model misspecification (e.g., detection of higher order interactions as in the $Y$-$M$ interaction NSC scenario). However, the results are very similar to those based on logistic regression and thus are not reported in the manuscript.

\subsection{Results for Main Effect NSC}
We summarize the performance of the three approaches based on the relative bias of their respective estimates (see Table \ref{tab:nsc-correct}). In this scenario, the FCS-NSC algorithm uses correctly specified models for the conditional distributions $Y_k|Y_{-k},M_{-k}$. The results confirm that the FCS-NSC has the lowest relative bias for all outcomes and all missing rates with finite sample size, and is also asymptotically unbiased. Concordant with the discussion in Section 2.4, FCS-MAR leads to biased estimates when the true missingness mechanism is NSC. The bias is relatively small when the missing rate is low (0.2) but increases dramatically when the missing rate reaches 0.4. Its asymptotic behavior has the same general pattern and the relative biases are not too different from those under finite sample size. The available-case analysis yields the highest biases across all missing rates, both for finite sample size and asymptotically.

% Table generated by Excel2LaTeX from sheet 'Bias'
\begin{table}[htbp]
  \centering
  \caption{Percent bias for the marginal probabilities of $Y_{k}$ associated with the three different methods when the true missingness mechanism is NSC (and FCS-NSC is correctly specified). Results are presented for both finite sample size ($N = 200$) and asymptotically ($N\to\infty$), where finite sample results are averages over 1,000 simulation replicates. Note: True marginal probabilities for Outcomes 1$-$3 are 0.4 and for Outcomes 4$-$6 are 0.6.\\}
  \begin{adjustbox}{max width=\textwidth}
    \begin{tabular}{|c|c|c|c|c|c|c|c|c|c|c|}
    \hline
    \multicolumn{2}{|c|}{\multirow{2}[4]{*}{Scenario}} & \multicolumn{3}{c|}{missing rate = 0.2} & \multicolumn{3}{c|}{ missing rate = 0.3} & \multicolumn{3}{c|}{missing rate = 0.4} \bigstrut\\
\cline{3-11}    \multicolumn{2}{|c|}{} & MAR   & NSC   & Available & MAR   & NSC   & Available  & MAR   & NSC   & Available  \bigstrut\\
    \hline
    \multirow{6}[2]{*}{N = 200} & Outcome 1 & -2.51 & -0.10 & -26.38 & -8.23 & 0.70  & -47.58 & -31.61 & 1.28  & -70.99 \bigstrut[t]\\
          & Outcome 2 & -2.72 & -0.23 & -26.59 & -7.96 & 0.77  & -47.07 & -32.10 & 0.05  & -71.50 \\
          & Outcome 3 & -2.58 & -0.15 & -26.28 & -7.86 & 0.88  & -46.94 & -30.99 & 0.84  & -71.20 \\
          & Outcome 4 & 1.87  & 0.26  & 17.75 & 5.52  & -0.46 & 31.54 & 21.20 & -0.97 & 47.52 \\
          & Outcome 5 & 1.90  & 0.28  & 17.92 & 5.54  & -0.44 & 31.64 & 20.94 & -0.60 & 47.53 \\
          & Outcome 6 & 1.81  & 0.28  & 17.54 & 5.44  & -0.18 & 31.53 & 20.95 & -0.10 & 47.57 \bigstrut[b]\\
    \hline
    \multirow{2}[2]{*}{Asymptotic} & Outcome 1-3 & -2.67 & 0.00  & -26.47 & -8.92 & 0.00  & -47.30 & -36.32 & 0.00  & -71.37 \bigstrut[t]\\
          & Outcome 4-6 & 1.78  & 0.00  & 17.65 & 5.94  & 0.00  & 31.53 & 24.18 & 0.00  & 47.57 \bigstrut[b]\\
    \hline
    \end{tabular}%
    \end{adjustbox}
  \label{tab:nsc-correct}%
\end{table}%

\subsection{Results for $Y$-$M$ Interaction NSC}
Note that the additional three-way interactions introduced in this scenario make the logistic models assumed in the FCS-NSC algorithm misspecified. As a result, the estimates from FCS-NSC are biased both under finite sample size and asymptotically (see Table \ref{tab:nsc-mis}). When the sample size is finite, the bias increases with the strength of the three-way interactions $|\lambda_{Y_kY_{k'}M_{k''}}|$. However, we observe a much larger discrepancy between the finite sample bias and asymptotic bias for FCS-NSC in this scenario compared to the main effect NSC scenario: the asymptotic bias for Outcome 1$-$3 remains at a low level with no apparent trend in its magnitude as $|\lambda_{Y_kY_{k'}M_{k''}}|$ increases while the asymptotic bias for Outcome 4$-$6 is non-monotone as $|\lambda_{Y_kY_{k'}M_{k''}}|$ increases.  The bias of FCS-NSC is the smallest among all three approaches for most configurations when $N = 200$ and universally outperforms the other two approaches asymptotically. This suggests that when the missing data mechanism is NSC, FCS-NSC with some degree of misspecification of the fully conditional models may be preferred over imputation approaches that rely on MAR or MCAR assumptions. 

As expected, FCS-MAR gives biased results and the magnitude of the bias is not a monotone function of $|\lambda_{Y_kY_{k'}M_{k''}}|$. Indeed, both the finite sample bias and the asymptotic bias first decrease and then increase in magnitude as $|\lambda_{Y_kY_{k'}M_{k''}}|$ gets larger. We also note the magnitude of the discrepancy between the 
finite sample bias and asymptotic bias is smaller for FCS-MAR than for FCS-NSC. Overall, the magnitude of the bias of FCS-MAR is larger than FCS-NSC but discernibly smaller than for the available-case estimator. The available-case analysis has the highest bias both under finite sample size and asymptotically; the bias also follows the ``decreasing-increasing'' trend as for FCS-MAR. This non-monotonicity is common in data generated with loglinear model where the higher order interactions have opposite signs as the lower order interactions \citep{fitzmaurice1993regression}.

% Table generated by Excel2LaTeX from sheet 'Bias'
\begin{table}[htbp]
  \centering
  \caption{Percent bias for the marginal probabilities of $Y_{k}$ associated with the three different methods when the true missingness mechanism is NSC (and FCS-NSC is misspecified). Results are presented for both finite sample size ($N = 200$) and asymptotically ($N\to\infty$), where finite sample results are averages over 1,000 simulation replicates. Note: True marginal probabilities for Outcomes 1$-$3 are 0.4 and for Outcomes 4$-$6 are 0.6.\\}
  \begin{adjustbox}{max width=\textwidth}
    \begin{tabular}{|c|c|c|c|c|c|c|c|c|c|c|}
    \hline
    \multicolumn{2}{|c|}{\multirow{2}[4]{*}{Scenario}} & \multicolumn{3}{c|}{$\lambda_{L_iL_jR_k}= -0.5$} & \multicolumn{3}{c|}{$\lambda_{L_iL_jR_k}= -1$} & \multicolumn{3}{c|}{$\lambda_{L_iL_jR_k}= -2$} \bigstrut\\
\cline{3-11}    \multicolumn{2}{|c|}{} & MAR   & NSC   & Available  & MAR   & NSC   & Available  & MAR   & NSC   & Available  \bigstrut\\
    \hline
    \multirow{6}[2]{*}{$N = 200$} & Outcome 1 & -11.47 & 2.69  & -44.03 & -2.38 & 5.92  & 6.20  & 14.60 & 13.16 & 55.62 \bigstrut[t]\\
          & Outcome 2 & -11.19 & 2.77  & -43.88 & -2.92 & 5.31  & 5.53  & 14.60 & 13.39 & 55.74 \\
          & Outcome 3 & -10.74 & 3.21  & -43.63 & -2.50 & 5.96  & 6.22  & 14.51 & 13.01 & 55.20 \\
          & Outcome 4 & 25.73 & -1.86 & 51.19 & 9.27  & -2.27 & 25.93 & -8.63 & -11.51 & 19.58 \\
          & Outcome 5 & 25.22 & -1.89 & 51.17 & 9.84  & -2.10 & 26.16 & -8.62 & -11.83 & 19.91 \\
          & Outcome 6 & 24.94 & -2.05 & 51.15 & 9.84  & -2.29 & 26.38 & -8.70 & -11.55 & 19.81 \bigstrut[b]\\
    \hline
    \multirow{2}[2]{*}{Asymptotic} & Outcome 1-3 & -14.31 & 0.17  & -43.97 & -5.80 & 0.18  & 5.82  & 10.95 & -0.19 & 55.92 \bigstrut[t]\\
          & Outcome 4-6 & 33.37 & 1.15  & 51.27 & 11.34 & 6.02  & 26.36 & -8.98 & -14.22 & 20.11 \bigstrut[b]\\
    \hline
    \end{tabular}%
    \end{adjustbox}
  \label{tab:nsc-mis}%
\end{table}%

\subsection{Results for MAR}
We summarize the results for this simulation scenario in Table \ref{tab:mar}. The results for FCS-NSC confirm the statement made in Section 2.4 that imputing MAR data with FCS-NSC will inflict bias on the estimates. Interestingly, the magnitude of the bias tends to be smaller when the pairwise dependence between $Y_k$ and $Y_{k'}$ increases; however, this result cannot be expected in general. One possible explanation for this phenomenon is that with higher $\lambda_{Y_kY_{k'}}$, the misspecified logistic models in FCS-NSC can capture more accurately the actual conditional distribution $Y_k|Y_{-k},M$ as the dependency between $Y_k$ and $M_k$, which is non-zero based on Theorem \ref{thm:NSC-MAR} for MAR data, can be absorbed into the main effects of $Y_{-k}$ and $M_{-k}$ through the dependence between the $Y$'s. As might be expected, in this scenario the FCS-MAR algorithm  produces the least biased estimates under finite sample size and is 
asymptotically unbiased. Similar to the previous two scenarios, the available-case analysis yields the most biased results and should, in general, be avoided in real data applications.

% Table generated by Excel2LaTeX from sheet 'MAR-sim'
\begin{table}[htbp]
  \centering
  \caption{Percent bias for the marginal probabilities of $Y_{k}$ associated with the three different methods when the true missingness mechanism is MAR (and FCS-NSC is misspecified). Results are presented for both finite sample size ($N = 200$) and asymptotically ($N\to\infty$), where finite sample results are averages over 1,000 simulation replicates. Note: True marginal probabilities for Outcomes 1$-$3 are 0.4 and for Outcomes 4$-$6 are 0.6.\\}
  \begin{adjustbox}{max width=\textwidth}
    \begin{tabular}{|c|c|c|c|c|c|c|c|c|c|c|}
    \hline
    \multicolumn{11}{|c|}{$\lambda_{L_iL_j} = 0.5$} \bigstrut\\
    \hline
    \multicolumn{2}{|c|}{\multirow{2}[4]{*}{Scenario}} & \multicolumn{3}{c|}{missing = 0.2} & \multicolumn{3}{c|}{missing = 0.3} & \multicolumn{3}{c|}{missing = 0.4} \bigstrut\\
\cline{3-11}    \multicolumn{2}{|c|}{} & MAR   & NSC   & Available & MAR   & NSC   & Available & MAR   & NSC   & Available \bigstrut\\
    \hline
    \multirow{6}[2]{*}{N = 200} & Outcome 1 & 0.45  & -0.64 & 1.49  & 0.33  & -7.05 & 3.13  & 1.94  & -3.21 & 6.05 \bigstrut[t]\\
          & Outcome 2 & 0.79  & -0.30 & 1.77  & 0.60  & -6.89 & 3.29  & 2.10  & -3.15 & 6.18 \\
          & Outcome 3 & 0.80  & -0.24 & 1.95  & 1.37  & -6.17 & 3.93  & 1.96  & -3.06 & 6.23 \\
          & Outcome 4 & 0.01  & 0.70  & -0.65 & -0.61 & 4.33  & -2.56 & -1.23 & 2.05  & -3.91 \\
          & Outcome 5 & -0.25 & 0.31  & -0.99 & -1.19 & 3.91  & -3.08 & -1.73 & 1.95  & -4.38 \\
          & Outcome 6 & -0.11 & 0.56  & -0.83 & -0.60 & 4.33  & -2.54 & -1.60 & 1.92  & -4.29 \bigstrut[b]\\
    \hline
    \multirow{2}[2]{*}{Asymptotic} & Outcome 1-3 & 0.00  & -1.42 & 1.45  & 0.00  & -9.18 & 3.74  & 0.00  & -11.87 & 6.06 \bigstrut[t]\\
          & Outcome 4-6 & 0.00  & 0.94  & -0.97 & 0.00  & 6.12  & -2.49 & 0.00  & 7.94  & -4.04 \bigstrut[b]\\
    \hline
    \multicolumn{11}{|c|}{$\lambda_{L_iL_j}=2$} \bigstrut\\
    \hline
    \multicolumn{2}{|c|}{\multirow{2}[4]{*}{Scenario}} & \multicolumn{3}{c|}{missing = 0.2} & \multicolumn{3}{c|}{missing = 0.3} & \multicolumn{3}{c|}{missing = 0.4} \bigstrut\\
\cline{3-11}    \multicolumn{2}{|c|}{} & MAR   & NSC   & Available & MAR   & NSC   & Available & MAR   & NSC   & Available \bigstrut\\
    \hline
    \multirow{6}[2]{*}{N = 200} & Outcome 1 & 0.66  & 0.59  & 4.44  & 1.89  & 0.84  & 12.29 & 4.35  & 7.67  & 20.72 \bigstrut[t]\\
          & Outcome 2 & 0.77  & 0.73  & 4.67  & 2.07  & 1.02  & 12.56 & 4.69  & 8.06  & 21.21 \\
          & Outcome 3 & 0.80  & 0.69  & 4.69  & 1.76  & 0.82  & 12.39 & 4.62  & 7.64  & 21.14 \\
          & Outcome 4 & -0.98 & -1.00 & -3.49 & -1.69 & -1.06 & -8.50 & -2.71 & -5.03 & -13.78 \\
          & Outcome 5 & -0.88 & -0.79 & -3.26 & -1.68 & -1.06 & -8.66 & -3.00 & -5.24 & -13.98 \\
          & Outcome 6 & -1.00 & -0.94 & -3.67 & -1.69 & -1.04 & -8.72 & -2.84 & -4.95 & -13.70 \bigstrut[b]\\
    \hline
    \multirow{2}[2]{*}{Asymptotic} & Outcome 1-3 & 0.00  & -0.31 & 4.96  & 0.00  & -1.56 & 12.75 & 0.00  & 1.21  & 20.66 \bigstrut[t]\\
          & Outcome 4-6 & 0.00  & 0.21  & -3.31 & 0.00  & 1.04  & -8.50 & 0.00  & -0.73 & -13.78 \bigstrut[b]\\
    \hline
    \end{tabular}%
    \end{adjustbox}
  \label{tab:mar}%
\end{table}%

\section{Application}

We illustrate the use of the proposed FCS-NSC algorithm with an application to data collected from a longitudinal clinical trial comparing different treatments for cocaine use disorder, the National Institute on Drug Abuse (NIDA) Collaborative Cocaine Treatment Study (CCTS) \citep{crits1999psychosocial}. Briefly, the NIDA CCTS was a multi-site clinical trial of $N = 487$ patients randomized to four psychosocial treatments for six months: group drug counseling (GDC) alone, individual cognitive therapy (CT) plus GDC, individual supportive-expressive (SE) psychodynamic therapy plus GDC, and individual drug counseling (IDC) plus GDC. In this trial, participants were 18 years of age or older (mean age 33.9 yrs), 23\% female, 58\% white, and met criteria for current cocaine dependence according to Diagnostic and Statistical Manual of Mental Disorders, Fourth Edition \citep{association2000diagnostic}. The trial utilized both weekly urine toxicology screens and self-report information at the monthly research visits to determine the status of cocaine use (yes/no) in each of the six months of active treatment. We define the endpoint of our analysis to be abstinence {\em for at least three consecutive months} during the six-month treatment, which is consistent with a criterion used in the original study \citep{crits1999psychosocial}. We are primarily interested in contrasting the treatment GDC plus IDC to the other three treatment options, as previous reports found GDC plus IDC to be the most effective treatment among the four.

Due to intermittent absence from the monthly visits, the missingness pattern of the monthly cocaine usage status is non-monotone. Because drug use is a $``$sensitive$"$ outcome, such that 
participants may not attend monthly visits because of their use of cocaine that would be detectable from urine toxicology screens, the missingness is thought likely to be MNAR \citep{robins1997non,vansteelandt2007estimation}. 
That is, it is thought likely that some fraction, possibly a relatively large fraction, of monthly cocaine use assessments may have been missing because participants were using cocaine. 
To address this concern,  we implement the proposed FCS-NSC algorithm and generate 20 imputed datasets based on the NSC missing data mechanism. We include the two fully-observed covariates (site and treatment group) in the FCS models; specifically, the models include the main effects of both covariates and $(Y_{-k}, M_{-k})$, in addition to interactions between treatment group and $(Y_{-k}, M_{-k})$ only. This specification is based on the assumption that parameters of the missing data mechanism are more likely to vary 
by treatment group but not site. 
For the endpoint of primary interest in this analysis, 
abstinence for at least three consecutive months during the six-month treatment, we fit a logistic regression model that includes indicator variables for the main effects of site and treatment group to each of the
multiple imputed datasets. Estimates of the contrasts between treatment groups, as well as their associated p-values, were obtained from the combined analyses using Rubin's rules \citep{rubin2004multiple}. 

The results of the treatment group comparisons of interest are presented in Table \ref{tab:app_res}. The results confirm that the GDC plus IDC has a statistically discernible higher odds of 3-month consecutive abstinence compared to any of the other three treatment options. In addition, as indicated by the estimated odds ratios in Table \ref{tab:app_res}, it appears that the other three treatment options tend to have similar odds of 3-month consecutive abstinence; these results are consistent with the results reported in \cite{crits1999psychosocial}.

% Table generated by Excel2LaTeX from sheet 'Sheet1'
\begin{table}[htbp]
  \centering
  \caption{Comparisons of GDC plus IDC to the other three treatment options. The results are based on 20 imputations, combined using Rubin's rules.}
  \vspace*{.15in}
    \begin{tabular}{r|ccc}
    \hline
    Treatment contrast      & log-OR & SE    & P-value \bigstrut[b]\\
    \hline
    GDC + SE vs. GDC + IDC & -0.964 & 0.314 & 0.002 \bigstrut[t]\\
    GDC + CT vs. GDC + IDC & -0.887 & 0.317 & 0.005 \\
    GDC vs. GDC + IDC & -0.816 & 0.318 & 0.011 \bigstrut[b]\\
    \hline
    \end{tabular}%
  \label{tab:app_res}%
\end{table}%

So far, the analysis based on the NSC assumption allows the probability that cocaine use is missing at any particular study visit to depend on (possibly unobserved) cocaine usage at any prior or subsequent visits. However, to allow the probability that cocaine use is missing at any particular study visit to depend on the (possibly unobserved) value of that variable, we must perform a sensitivity analysis.    
Specifically, we  perform a sensitivity analysis, anchored at the NSC model in the FCS-NSC algorithm, to examine how robust the above results are against potential violations of NSC. We follow the specification of sensitivity parameters as outlined in Section 2.3 and additionally assume that $\lambda_{MY}$ differs across the four treatment groups.
The rationale for allowing $\lambda_{MY}$ to differ across treatment groups is because treatment group comparisons may be more sensitive to {\em differential} MNAR missingness. 
As a result, there are four sensitivity parameters in these analyses. For the sensitivity analysis of the comparison between a specific pair of treatment groups, we vary the sensitivity parameters for these two treatments and fix the other two sensitivity parameters to be zero. We consider the range of the sensitivity parameters ($\lambda_{MY}$) to be between  0 and 2, which correspond to odds ratios between 1 and 7.4, where the odds ratio compares the odds of being missing for non-abstinent (use of cocaine) versus abstinent at a given study visit, conditional on the (possibly unobserved) outcome variables at all other study visits and their missingness indicators. This choice of range for 
$\lambda_{MY}$ was made in consultation with Dr. Weiss, whose clinical expertise and experience in conducting substance use disorder trials suggest that those with missing data may be approximately 2–3 times more likely to be using drugs. Note, we do not examine negative values of $\lambda_{MY}$ since clinical evidence suggests that non-abstinent 
study participants are much more likely to miss a study visit than abstinent study participants. That is, clinical trialists in substance use disorders have a strong prior belief that those with missing data are {\em more likely} to be using drugs.

In Figure \ref{fig:app-sen} we illustrate heatmaps of the p-values associated with the comparisons of treatment groups as we vary the sensitivity parameters. The panels in Figure 1 are colored such that a p-value larger than the conventional 0.05 significance level corresponds to blue shading, and red shading otherwise. From Figure 1, we can see that for the contrasts GDC+SE vs. GDC+IDC and GDC+CT vs. GDC+IDC, the p-values remain below 0.05 across all values of $\lambda_{MY}$, and in particular for values of $\exp(\lambda_{MY})$ 
between 2 and 3 (the range suggested by our clinical expert, Dr. Weiss). This indicates that conclusions from the earlier analysis based on the NSC assumption are robust to departures from NSC for these two particular contrasts. For the contrast of GDC vs. GDC+IDC, the p-values are greater than 0.05 only when $\lambda_{MY}$ for the GDC + IDC treatment group is extremely large ($>1.7$); in contrast, for the remainder of the parameter domain, the p-values are all less than 0.05 and the result of the main analysis under NSC is preserved. In addition, $\lambda_{MY} >1.7$ corresponds to odds ratios greater than 5.5, a magnitude of association far greater than that suggested by our clinical expert. Overall, both the main analysis and the sensitivity analysis suggest that GDC + IDC is more effective in treating cocaine use disorder than the other three treatment options. These results are also consistent with an earlier sensitivity analysis of these data focusing on a different treatment outcome, the total number of abstinent months, reported in \cite{fitzmaurice2019sensitivity}, based on a joint model that combines a MNAR selection model with a generalized linear mixed model for the binary variables.
% \\
% \noindent {\bf Cite:} Fitzmaurice GM, Lipsitz SR, Weiss RD. Sensitivity analysis for non-monotone missing binary data in longitudinal studies: Application to the NIDA collaborative cocaine treatment study. Stat Methods Med Res. 2019 Oct-Nov;28(10-11):3057-3073. doi: 10.1177/0962280218794725. Epub 2018 Aug 27. PMID: 30146938; PMCID: PMC6393220.

\begin{figure}
    \centering
    \includegraphics[width=0.95\textwidth]{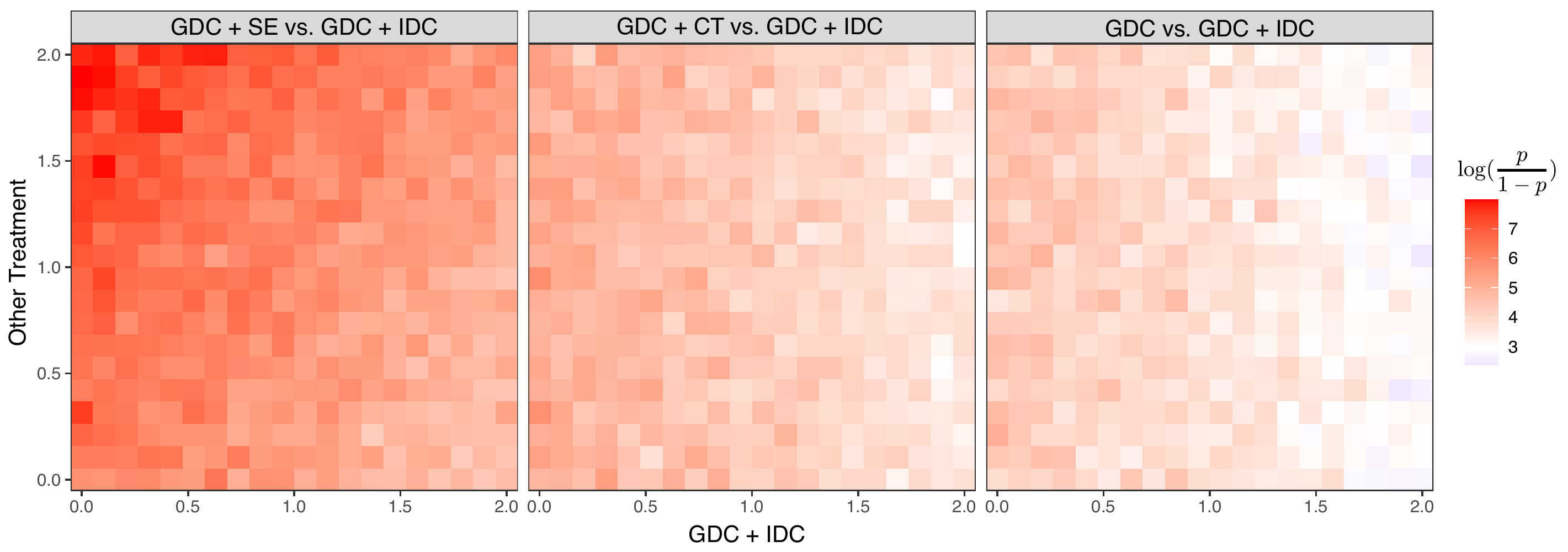}
    \caption{Results of sensitivity analysis for the comparisons between treatment groups. The p-values are derived using the same procedure as in the main analysis except where the imputations are obtained under different values of the sensitivity parameters. Colors are specified such that blue corresponds to p-values larger than 0.05 and red otherwise.}
    \label{fig:app-sen}
\end{figure}

\section{Discussion}
The missing at random (MAR) assumption is often regarded as a useful benchmark assumption for the analysis of partially missing data (e.g., see recommendations of the NRC report on handling missing data in clinical trials \citep{national2010prevention}). However, we note that the MAR assumption is far more compelling for the case of monotone patterns of missing data, as might arise in a longitudinal study with attrition or dropout. 
With non-monotone patterns of missingness, it is very difficult to conceive of MAR mechanisms that are not also MCAR \citep{robins1997non}. For example, the non-monotone MAR mechanism used in the simulations in Section 3 was somewhat contrived and does not model beliefs about the missing data process in most substantive applications; it was used only to explore the properties of the proposed NSC estimator when the true missingness mechanism is MAR. This has provided the impetus for considering a missingness model for data that are missing non-monotonically and MNAR. 
% \\
% \noindent {\bf Cite:} National Research Council. 2010. The Prevention and Treatment of Missing Data in Clinical Trials. Washington, DC: The National Academies Press. https://doi.org/10.17226/12955.

The NSC model adopted in this paper is such a non-monotone MNAR mechanism that can potentially provide an alternative to MAR as a benchmark assumption for the analysis of partially missing data. The NSC model permits a relatively high degree of interdependence between the missingness indicators and the underlying measured but sometimes missing variables. Although the missingness indicator $M_k$ is assumed to be conditionally independent of the  outcome variable
$Y_k$, NSC does allow for very general dependence on all other possibly missing outcome variables as well as all other missingness indicators.  
%Thus, unlike MAR which requires that the probability of the observed missingness pattern does not depend on any unobserved values, NSC allows for dependence on unobserved values of all other outcomes (and all other missingness indicators). 
%Although the conditional independence assumption of the NSC model 
%does not allow variables to {\em directly} influence their own missingness status, 
%it is worth emphasizing that it does permit a variable to influence its 
%own missingness status {\em indirectly}, i.e., the NSC model does not exclude marginal 
%dependence  between $Y_k$ and $M_k$. 
The results of simulation studies reported in Section 3 confirm that the proposed  approach to MI using FCS algorithms derived from a NSC model is almost unbiased in finite sample and across
a wide range of rates of missingness. Results in Section 3 also confirm that the proposed approach is
asymptotically unbiased. In contrast, the more conventional FCS-MAR approach,  
where the set of conditioned variables in the FCS models include only the other outcome variables, leads to biased estimates when the true missing
data mechanism is NSC. The bias was found to be relatively small when the missing rate is low (20\%) but increases quite dramatically when the missing rate is 40\%. 

On a related point, a general recommendation that is often cited in the statistical literature on conventional MI is to use a large number of observed variables to impute the missing data. The rationale behind this advice is that it may be more reasonable to assume missingness is MAR when conditioning on a larger number of observed variables. For example, when discussing the inclusion of missingness indicators of other variables as auxiliary covariates in the
imputation model, \cite{tompsett2018use}
%(page 2347)
recommend their inclusion by default, making the intuitively reasonable argument that their omission may potentially introduce bias, whereas a model that unnecessarily includes them  ``...should not have a negative impact on imputations''. The rationale for the latter part of this recommendation is the seemingly 
sensible principle that a FCS model that includes 
unnecessary auxiliary covariates should not introduce bias in the imputations. 
However, the results in Section 2.4 and Section 3 illustrate that this is not the case and that
there are potential perils to conditioning on the missingness indicators. Specifically, when the
true missing data mechanism is MAR, conditioning on missingness indicators of other variables in the
imputation model will, in general, introduce bias rather than reduce it.
These results highlight that the decision to 
include missingness indicators 
requires careful consideration of whether the true missing data mechanism is thought to be MAR or NSC. 

It is also important to recognize that the validity of NSC or MAR cannot be checked empirically from the data at hand (e.g., \cite{molenberghs2008every}), i.e., short of tracking
down the missing data, the NSC and MAR assumptions about missingness are wholly unverifiable from the data at hand. This has two important consequences. First, the choice between NSC and MAR as a benchmark assumption for the analysis of data missing  non-monotonically must be made on external grounds (completely unrelated to the data at hand),  say,  in consultation with a subject matter expert. As discussed earlier, we would argue that it is exceeding difficult to conceive of models for data missing non-monotonically that are MAR, e.g., in general, it is remarkably difficult to generate simulated data that would satisfy the MAR assumption  (as shown by our simulations in Section 3 with a very contrived non-monotone MAR mechanism).  Therefore, following \cite{cox2011principles}, page 104, we would argue that this makes the choice of MAR (over NSC) far less appealing and meaningful: 
$``$A test of meaningfulness of a possible model for a data-generating process is whether it can be used directly to simulate data.$"$ 
In addition, Theorem 1 in Section 2.4 also suggests that it would not be straightforward 
to define sensitivity parameters for a MAR model, as it would have some $\lambda$ terms with both $Y_k$ and $M_k$ in it.
Second, because the observed data provide no information
that can either support or refute NSC (or any other MNAR mechanism), it is important to conduct
sensitivity analysis to departures from the conditional independence assumption. As was discussed in Section 2.3, and illustrated in Section 4, our proposed FCS algorithm under NSC can easily be extended to incorporate a set of sensitivity parameters that have simple interpretations while preserving the computational efficiency of the imputation approach.
Thus, when considering a benchmark assumption for data missing non-monotonically, 
there is a natural and transparent approach to conducting sensitivity analysis to departures from NSC that relates 
inferences to one or more parameters that capture departures from the 
conditional independence assumption for $Y_k$ and
$M_k$.   
In summary, some potential benefits of the NSC model assumption for data missing non-monotonically are that (i) unlike MAR, it provides a plausible mechanism for non-monotone patterns of missingness, (ii) the conditional independence assumption is straightforward to interpret and explain to subject matter experts, (iii) the NSC model can be implemented in a straightforward way using multiple imputations based on FCS, and (iv) the NSC model assumption can be readily modified to allow for an interpretable sensitivity analysis.

In Section 2, we proposed multiple imputations based on FCS as an alternative 
to the more computationally challenging IPW/AIPW estimators for the NSC model proposed in 
\cite{malinsky2022semiparametric}. FCS is a natural approach to multiple imputation
because the NSC model assumption relates directly to the fully conditional distribution of each outcome variable. That is, to incorporate
the NSC assumption in the multiple imputations, the set of conditioned variables in the FCS models must include all other outcome variables as well as all other missingness indicators. 
Relative to the IPW estimator for the NSC model, where weights associated with different 
non-monotone patterns of missingness are estimated through the solution of 
a set of estimating equations whose dimension is $O(3^K)$, MI via FCS only requires specification of $K$ regression models for each of the $K$ outcomes, but with missingness indicators included as additional covariates. The proposed approach can also be readily implemented in existing software for MI (e.g., using the \textit{mice} package in R). In addition, unless sample size is very large relative to $K$, multiple imputation may be considered an appealing alternative to IPW 
due to the fact that the proportion of ``complete cases'' can be relatively small 
when data are missing non-monotonically, even if the ``positivity'' assumption required of IPW may be strictly met.
We note that \cite{malinsky2022semiparametric} also proposed an augmented IPW (AIPW) estimator where the augmentation term incorporates information from all the missingness patterns. However, this AIPW estimator is computationally challenging to implement when $K$ is relatively large. 
Finally, it should be acknowledged that a distinct advantage
of the IPW approach over multiple imputation is that it does not need to model the joint distribution $p(Y)$; it only requires
correct specification of the model for the non-monotone patterns of missingness, $p(M | Y)$. This makes the IPW approach particularly appealing when $K$ is relatively small so that the 
solution of the set of IPW estimating equations is not too computationally demanding.  

In applications of the NSC model for the analysis of data that are missing non-monotonically, in particular when $K$ is relatively large, it will be necessary to place some restrictions on the joint distribution $p(Y)$ and the conditional interdependence between $M_k$ and $Y_{-k}$. This is due in part to the following factors. The first is the lack of a convenient joint distribution for multivariate binary variables, unlike the multivariate normal distribution (i.e., a joint distribution that is a function of the first two moments only). Paradoxically, the multinomial joint distribution for multivariate binary data requires specification of many higher-order moments despite the fact that there is, in some sense, substantially less information in the binary than in the continuous data case. Moreover, due to the curse of dimensionality, estimation of the higher-order moments would face serious difficulties with sparseness of data once the number of binary variables exceeds 5 or 6. As a result, some degree of simplification of the NSC model is required, with the
number of restrictions driven by both $K$ and the sample size. In Sections 2$-$4 
we placed restriction on higher-order conditional moments by imposing constraints on the higher-order interaction terms in the FCS models, assuming that the corresponding regression parameters, the $\lambda$'s, are zero. For example, in the main effect NSC simulation scenario, we place the constraints that $\lambda_{Y_IM_J} = 0$ when $|I|>1$ and $|J|>1$ or $|I| + |J|>2$. This implies that the logistic model of $Y_k|Y_{-k},M_{-k}$ contains only main effects of $Y_{-k}$ and $M_{-k}$.
Relaxing constraints on $\lambda_{Y_kY_{k'}M_{k''}}$ would imply that 
the logistic model of $Y_k|Y_{-k},M_{-k}$ in also contains interaction terms $Y_kM_{k'}$.
In principle, additional constraints can be relaxed provided the data are not too sparse.
Similarly, 
when developing a sensitivity analysis strategy for departures from NSC, 
we have incorporated a fixed and known sensitivity parameter, $\lambda_{M_kY_k}$ for the conditional 
dependence of $Y_k$ on $M_k$, thereby directly relating missingness to the possibly unobserved 
binary variable. The choice of values for $\lambda_{M_kY_k}$ (or for
$\lambda_{MY}$, when we set $\lambda_{M_kY_k} = \lambda_{MY}$ for $k=1,\ldots,K$) should be made on subject-matter grounds guided by knowledgeable study investigators.
The rationale for our choice of this very simple model for the conditional
dependence of $Y_k$ on $M_k$ is two-fold. First, it is advantageous to keep the dimension of the sensitivity parameters low. With $K$ variables 
there are potentially an excessively large 
number of parameters for describing the possible conditional
dependence of $Y_k$ on $M_k$ and sensitivity analysis over such a large number of parameters is simply not viable. Second, the parameter $\lambda_{M_kY_k}$ (or
$\lambda_{MY}$) is straightforward to interpret and makes transparent what is assumed about the missing variables. 

Finally, we note that imputation models that also include the indicators of missingness have been suggested in the literature by \cite{tompsett2018use}, and more recently by \cite{beesley2021multiple}. In particular, in the context of sensitivity analysis, \cite{tompsett2018use}   recommend FCS imputation models also include the indicators of missingness.  However, no formal statistical basis for including indicators of missingness was given in \cite{tompsett2018use}. \cite{beesley2021multiple} propose a set of identifying assumptions that are stronger than NSC and, using Taylor series and other approximations, lead to FCS imputation models that also include dependence on the indicators of missingness.
In the previous sections we have shown that the NSC assumption provides a formal statistical justification for including all other missingness indicators in the FCS model and that these conditional distributions are identifiable given the observed data. Additionally, the ``self censoring'' parameters (e.g, $\lambda_{M_kY_k}$) for the conditional dependence between $Y_k$ and $M_k$ are not identifiable given the observed data but are the natural parameters for sensitivity analysis to departures from NSC.

In closing, some possible avenues of future research include considering whether more flexible 
FCS models, e.g., general machine learning algorithms as alternatives to logistic regressions, can protect against
potential model misspecification in the FCS algorithm. 
%Some preliminary results (not presented) based on the simulation scenarios in Section 3 suggest that chained equations using random forests may not provide additional protection against misspecification of higher-order moments.
Another avenue for future research is the extension of the FCS-NSC algorithm to data that are non-binary. In principle, the extension to continuous data, count data, ordinal data, and so on, is relatively
straightforward once a suitable choice of link function is chosen in the model for $Y_k|Y_{-k},M_{-k}$; alternatively, relevant machine learning algorithms could be explored.

\section*{Acknowledgements}
The authors thank Dr. Paul Crits-Christoph for sharing data from the NIDA Collaborative Cocaine Treatment Study. 

\section*{Declaration of conflicting interests} 
Roger Weiss has consulted to Alkermes. 
All other authors report no conflicts of interest. 

\section*{Funding} 
This work was supported by the National Institutes of Health, National Institute on Drug Abuse grants [NIDA R33 DA042847, UG1 DA015831].

\vspace*{.5in}

% {\bf [Q. Have we adequately addressed all of these bullet points in the text above?]}
% \begin{itemize}
%     \item How to choose between MAR and NSC as the missing mechanism, especially for sensitivity analysis of non-monotone missing data, in light of their mutually exclusive nature
%     \item How to deal with model misspecification in FCS-NSC algorithm? Whether more flexible models like random forests can help?
%     \item Tompsett et al (2018)'s recommendation of including missing indicators in FCS models for sensitivity analysis
%     \item Discussing Beesley et al (2021): the similarity and distinction of it to our approach
%     \item Adding in discussion: Theorem 1 also suggests that it's not easy to define a sensitivity parameters for MAR model, as it will have some lambda with both $Y_k$ and $M_k$ in it. But sensitivity analysis from NSC only have a simple set of parameters without any additional constraints.
%     \item Discussion about IPW/AIPW vs. imputation: 1) completers could be scarce in non-monotone data 2) scenarios where IPW is better than imputation
%     \item Keep in mind this paper: Joint modelling rationale for chained equations. Also see here: https://stefvanbuuren.name/fimd/sec-FCS.html about loglinear models with only order 2 interactions does not behave well enough
%     \item Extending NSC to non-binary outcomes
% \end{itemize}

\bibliographystyle{chicago}
\bibliography{reference}

\end{document}